\tikzset{initial text={}}
\newcommand{\hide}[1]{}
\newcommand{\cA}{\mathcal{A}}
\newcommand{\cB}{\mathcal{B}}
\newcommand{\cT}{\mathcal{T}}
\newcommand{\cI}{\mathcal{I}}
\newcommand{\cR}{\mathcal{R}}
\newcommand{\langA}{A}
\newcommand{\langB}{B}
\newcommand{\langT}{T}
\newcommand{\langI}{I}
\newcommand{\langR}{R}
\newcommand{\langE}{E}
\newcommand\ltrue{\mathsf{true}}
\newcommand\lfalse{\mathsf{false}}
\newcommand{\MEM}{\mathit{Mem}}
\newcommand{\EQ}{\mathit{Equ}}
\newcommand{\libalf}[0]{\textsf{\small libalf}}
\newcommand{\T}[0]{\textsf{T}}
\newcommand{\N}[0]{\textsf{N}}
\newcommand{\OMIT}[1]{}
\newcommand{\Nat}{\mathbb{N}}
\definecolor{linkcolor}{rgb}{0.65,0,0}
\definecolor{citecolor}{rgb}{0,0.65,0}
\definecolor{urlcolor}{rgb}{0,0,0.65}
\newtheorem{theorem}{Theorem}
\newtheorem{example}{Example}
\newtheorem{proposition}{Proposition}
\begin{document}
%
\title{Learning to Prove Safety over Parameterised Concurrent Systems (Full
Version)}

\author{\IEEEauthorblockN{Yu-Fang Chen}
\IEEEauthorblockA{Academia Sinica}
\and
\IEEEauthorblockN{Chih-Duo Hong}
\IEEEauthorblockA{Oxford University}
\and
\IEEEauthorblockN{Anthony W. Lin}
\IEEEauthorblockA{Oxford University}
\and
\IEEEauthorblockN{Philipp R\"ummer}
\IEEEauthorblockA{Uppsala University}
}


%


\maketitle

\begin{abstract}
 We revisit the classic problem of proving safety over 
parameterised concurrent systems, i.e., an infinite family of finite-state
concurrent systems that are represented by some finite (symbolic) means. 
An example of such an infinite family is a dining philosopher protocol with any 
number $n$ of processes ($n$ being the parameter that defines the infinite 
family). Regular model checking is a well-known generic framework for modelling 
parameterised concurrent systems, where an infinite set of configurations 
(resp.~transitions) is represented by a regular set (resp.~regular transducer). 
Although verifying safety properties in the regular model checking framework is undecidable in 
general, many sophisticated 
semi-algorithms have been developed in the past fifteen years that can 
successfully prove safety in many practical instances. 
In this paper, we propose a simple solution to synthesise regular inductive invariants that makes use of Angluin's classic $L^*$ algorithm (and its variants).
We provide a termination guarantee 
when the set of configurations reachable from a given set of initial
configurations is regular.
We have tested $L^*$ algorithm on standard (as well as new) examples in regular 
model checking including the dining philosopher protocol, the dining cryptographer
protocol, and several mutual exclusion protocols (e.g. Bakery, Burns, Szymanski, 
and German). Our experiments show that, despite the simplicity of our solution, it can perform at least as well as
existing semi-algorithms. 

\end{abstract}

%
\IEEEpeerreviewmaketitle

\section{Introduction}
\label{section:introduction}
Parameterised concurrent systems are infinite families of finite-state
concurrent systems, parameterised by the number $n$ of processes.
There are numerous examples of parameterised concurrent systems, including
models of distributed algorithms 
which are typically designed to handle an arbitrary number $n$ of processes
\cite{Fokkink-book,LSS94}.
Verification of such systems, then, amounts to
proving that a desired property holds for \emph{all} permitted values of $n$.
For example, proving that the safety property holds for a dining philosopher
protocol entails proving that the
protocol with any given number~$n$ of philosophers ($n \geq 3$) can never reach
a state when two neighbouring philosophers eat simultaneously.
For each given value of $n$, verifying safety/liveness is decidable, albeit
the exponential state-space explosion in the parameter $n$. However, when
the property has to hold for each value of $n$, the number of system
configurations a verification algorithm has to explore is potentially infinite.
Indeed, even safety checking is already
undecidable for parameterised concurrent systems \cite{AK86,BF13,Javier16}; see
\cite{sasha-book}
for a comprehensive survey on the decidability aspect of the parameterised
verification problem.

Various sophisticated semi-algorithms for verifying parameterised concurrent
systems are available. These semi-algorithms typically rely on a
symbolic framework for representing infinite sets of system configurations
and transitions. \emph{Regular model checking}
\cite{JN00,rmc-survey,Boigelot-thesis,BLW03,LTL-MSO,Parosh12,RMC,rmc-pnueli,WB98}
is one well-known symbolic framework for modelling and verifying parameterised
concurrent systems. In regular model checking, configurations are modelled
using words over a finite alphabet, sets of configurations are
represented as regular languages, and the transition relation is defined by a
regular transducer. From the research programme of regular model checking,
not only are regular languages/transducers known to be highly expressive
symbolic representations for modelling parameterised concurrent systems,
they are also amenable to an automata-theoretic approach (due to many nice
closure properties of regular languages/transducers), which have often proven
effective in verification.

In this paper, we revisit the classic problem of verifying safety in the regular
model checking framework. Many sophisticated semi-algorithms for dealing with
this problem have been developed in the literature using methods such as
abstraction~\cite{abdulla2007regular,abdulla2013all,bouajjani2004abstract,bouajjani2006abstract},
widening~\cite{BLW03,BT-widening}, acceleration~\cite{rmc-thesis,JN00,BFLS05},
and learning~\cite{Neider-thesis,Neider13,habermehl2005regular,Lever,VSVA04}.
One standard technique for proving safety for an infinite-state systems is by
exhibiting an \emph{inductive invariant} $\mathit{Inv}$ (i.e. a set of configurations
that is closed under an application of the transition relation) such that
\textbf{(i)} $\mathit{Inv}$
subsumes the set $\mathit{Init}$ of all initial configurations, but
\textbf{(ii)} $\mathit{Inv}$ does not intersect with the set
$\mathit{Bad}$ of unsafe configurations.
In regular model checking, the sets $\mathit{Init}$
and $\mathit{Bad}$ are given as regular sets. For this reason, a natural method for
proving safety in regular model checking is to exhibit a \emph{regular}
inductive invariant satisfying (i) and (ii). The regular set
$\mathit{Inv}$ can be constructed as a ``regular proof'' for safety since checking that
a candidate regular set $\mathit{Inv}$ is a proof for safety is decidable.
A few semi-algorithms inspired by automata learning 
--- some based on the passive learning algorithms \cite{habermehl2005regular,Neider13,AbdullaACHRRS14} and
some others based on active learning algorithms \cite{Neider13,VSVA04}--- have been proposed to synthesise a regular inductive invariant in regular model checking.
Despite these
semi-algorithms, not much attention has been paid to applications of automata
learning in regular model checking.

\OMIT{
Most properties of interests
in the verification of concurrent systems (including safety and liveness) hold
\emph{regardless of the concrete value of the parameter $n$}.
}

In this paper, we are interested in one basic research question in regular
model checking: \emph{can we effectively apply the classic Angluin's $L^*$
automata learning \cite{angluin1987learning} (or variants
\cite{rivest:inference1993,kearns:introduction1994}) to learn a regular
inductive invariant?} Hitherto this question, perhaps surprisingly, has no
satisfactory answer in the literature. A more careful consideration reveals
at least two problems.
Firstly, membership
queries (i.e.~is a word $w$ reachable from $\mathit{Init}$?) may be asked by the $L^*$
algorithm, which amounts to checking reachability in an infinite-state system,
which is undecidable in general. This problem was already noted in
\cite{Neider-thesis,Neider13,VSVA04,Lever}.
\OMIT{
Fortunately, there is an elegant solution to this issue, namely the
representation of systems in terms of \emph{length-preserving} regular
transducers, i.e., as relations that can only relate two words of the
same length.
}
 Secondly, a regular inductive invariant satisfying (i)
and (ii) might not be unique, and so strictly speaking we are not dealing with
a well-defined learning problem. More precisely, consider the question of
\emph{what the teacher should answer when the learner asks whether $v$ is
in the desired invariant, but $v$ turns out not to be reachable from $\mathit{Init}$?}
Discarding $v$ might not be a good idea, since this could force the learning algorithm to look for a
\emph{minimal} (in the sense of set inclusion) inductive invariant, which
might not be regular.
Similarly, let us consider what the teacher should answer in the case when we
found a pair $(v,w)$ of configurations such that (1) $v$ is in the candidate
$\mathit{Inv}$, (2) $w \notin \mathit{Inv}$, and (3) there is a transition from $v$
to $w$. In the ICE-learning framework \cite{GLMN13,GLMN14,Neider-thesis}, the
pair $(v,w)$ is called an \emph{implication counterexample}. To satisfy the
inductive invariant constraint, the teacher may respond that $w$ should be added to $\mathit{Inv}$,
or that $v$ should be removed from $\mathit{Inv}$.  Some works in
the literature have proposed using a three-valued logic/automaton (with 
``don't know'' as an answer) because of the
teacher's incomplete information \cite{grinchtein2006inferring,ChenFCTW09}.

\OMIT{
from which
there is a transition out of $\mathit{Inv}$? As noted in \cite{Neider-thesis}, the
answer to this question is not unique since an inductive invariant only
needs to be an over-approximation of the set $post^*(\mathit{Init})$ of configurations
reachable from $\mathit{Init}$ (a.k.a. the reachability set).
}

\paragraph{Contribution} In this paper, we propose a simple and practical
solution to the problem of applying the classic $L^*$ automata learning
algorithm and its variants to synthesise a regular inductive invariant in regular model
checking.
To deal with the first problem mentioned in the previous paragraph, we
propose to restrict to \emph{length-preserving} regular transducers.
In theory, length-preservation is not a restriction for safety
analysis, since it just implies that each instance of the considered
parameterised system is operating on bounded memory of size~$n$ (but
the parameter~$n$ is unbounded).  Experience shows that many practical
examples in parameterised concurrent systems can be captured naturally in terms
of length-preserving
systems, e.g., see
\cite{lin2016liveness,rmc-survey,LTL-MSO,JN00,RMC,rmc-thesis,Parosh12}. The
benefit of the restriction is that the problem of membership queries
is now decidable, since the set of
configurations that may reach (be reachable from) any given configuration $w$
is finite and can be solved by a standard finite-state model checker.
For the second problem mentioned in the previous paragraph, we propose
that a \emph{strict teacher} be employed in $L^*$
learning for regular inductive invariants in regular model checking. A
strict teacher attempts to teach the learner the minimal inductive invariant
(be it regular or not), but is satisfied when the candidate answer posed by
the learner is an inductive invariant satisfying (i) and (ii) without
being minimal. [In this sense, perhaps a more appropriate term is a
\emph{strict but generous teacher}, who tries to let a student pass a final exam
whenever possible.]
For this reason, when the learner asks whether $w$ is in the desired inductive
invariant, the teacher will reply NO if $w$ is not reachable from $\mathit{Init}$.
The same goes with an
implication counterexample $(v,w)$ such that the teacher will say that
an unreachable $v$ is not~in~the~desired~inductive~invariant.

We have implemented the learning-based approach in a prototype tool with an interface to the \libalf\ library, which includes the $L^*$ algorithm and its variants.
Despite the simplicity of our solution, it (perhaps surprisingly) works
extremely well in practice, as our
experiments suggest. We have taken numerous standard examples from regular
model checking, including cache coherence protocols (German's Protocol),
self-stabilising protocols (Israeli-Jalfon's Protocol and Herman's
Protocol), synchronisation protocols (Lehmann-Rabin's Dining Philosopher
Protocol), secure multi-party computation protocols (Dining Cryptographers
Protocol \cite{chaum1988dining}), and mutual exclusion
protocols (Szymanski's
Protocol, Burn's Protocol, Dijkstra's Protocol, Lamport's bakery algorithm,
and Resource-Allocator Protocol).
We show that $L^*$ algorithm can perform
at least as well as (and, in fact, often outperform) existing semi-algorithms.
We compared the performance of our algorithm with well-known and established  techniques
such as SAT-based learning~\cite{Neider13,Neider-thesis,lin2015regular,lin2016liveness}, abstract regular model checking (ARMC), which is based on abstraction-refinement using predicate abstractions and finite-length abstractions \cite{bouajjani2006abstract,bouajjani2004abstract}, and T(O)RMC, which is based on extrapolation (a widening technique)~\cite{tormc}.
Our experiments show that, despite the simplicity of our solution, it can perform at least as well as
existing semi-algorithms.

\paragraph{Related Work}
The work of Vardhan \emph{et al.} \cite{Lever,VSVA04} applies $L^*$ learning to
infinite-state systems and, amongst other, regular model checking. The
learning algorithm attempts to learn an inductive invariant enriched with
``distance'' information, which is one way to make membership queries (i.e. reachability for general infinite-state systems) decidable. This often makes the resulting set not regular, even
if the set of reachable configurations is regular, in which case our algorithm is guaranteed to terminate (recall our algorithm is only learning a regular invariant without distance information). Conversely, when an inductive invariant enriched with distance information is regular, so is the projection that omits the distance information.
Unfortunately, neither their tool Lever~\cite{Lever}, nor the
models used in their experiments are available, so that we cannot
make a direct comparison to our approach. A learning algorithm allowing incomplete information~\cite{grinchtein2006inferring} has been applied in~\cite{Neider13} for inferring inductive invariants of regular model checking. Although the learning algorithm in~\cite{grinchtein2006inferring} uses the same data structure as the standard $L^*$ algorithm, it is essentially a SAT-based learning algorithm (its termination is not guaranteed by the Myhill-Nerode theorem).

Despite our results that SAT-based learning seems to be less efficient than
$L^*$ learning for synthesising regular inductive invariants in regular model
checking, SAT-based learning is more general and more easily applicable
when verifying other properties, e.g., liveness \cite{lin2016liveness},
fair termination \cite{LLMR17}, and safety games \cite{NT16}.
View abstraction~\cite{abdulla2013all} is a novel technique for parameterised verification.
Comparing to parameterised verification based on view abstraction, our framework
(i.e. general regular model checking framework with transducers)
provides a more expressive modelling language that is required in specifying
protocols with near-neighbour communication (e.g. Dining Cryptographers and
Dining Philosophers).

When preparing the final version, we found that
a very similar algorithm had already appeared in Vardhan's
thesis~\cite[Section~6]{Vardhan-thesis} from 2006; in particular,
including the trick to make a membership query (i.e.\ point-to-point
reachability) decidable by bounding the space of the transducers.  The
research presented here was conducted independently, and considers
several aspects that were not yet present in \cite{Vardhan-thesis},
including experimental results on systems that are not counter systems
(parameterised concurrent systems with topologies), and heuristics
like the use of shortest counterexamples and caching. We cannot
compare our implementation in detail with the one from
\cite{Vardhan-thesis}, since the latter tool is not publicly
available.

\paragraph{Organisation}
The notations are defined in Section~\ref{section:preliminaries}. A brief introduction to regular model checking and automata learning is given in Section~\ref{section:rmc} and Section~\ref{section:learning}, respectively. The learning-based algorithm is provided in Section~\ref{section:algorithm}. The result of the experiments is in Section~\ref{section:evaluation}.


\section{Preliminaries}
\label{section:preliminaries}



\paragraph{General Notations}
Let $\Sigma$ be a finite set of symbols called \emph{alphabet}. A word over $\Sigma$ is a finite sequence
of symbols of $\Sigma$. We use $\lambda$ to represent an empty word.
For a set $I\subseteq \Sigma^*$ and a relation $T\subseteq \Sigma^*\times 
\Sigma^*$, we define $T(I)$ to be the post-image of $I$ under $T$, i.e.,
$T(I)=\{y\mid \exists x .\ x\in I \wedge (x,y)\in T\}$.
Let $id =\{(x,x)\mid x\in \Sigma^*\}$ be the {\em identity relation}. We define
$T^n$ for all $n \in \Nat$ in the standard way by induction: $T^0=id$, and
$T^k =T \circ T^{k-1}$, where $\circ$ denotes the {\em composition} of 
relations. Let $T^*$ denote the transitive closure of $T$, i.e., $T^* 
=\bigcup_{i=1}^\infty T^i$. For any two sets $\langA$ and $\langB$, we use 
$\langA \ominus \langB$ to denote their {\em symmetric difference}, i.e., the set $\langA\setminus \langB \cup \langB\setminus\langA$.


\paragraph{Finite Automata and Transducer}
In this paper, automata/transducers are denoted in calligraphic fonts 
$\cA,\cB,\cI,\cT$ to represent automata/transducers, while the corresponding
languages/relations are denoted in roman fonts $\langA,\langB,\langI,\langT$.

A {\em finite automaton} (FA) is a tuple $\cA=(Q,\Sigma, \delta, q_0, F)$ where
$Q$ is a finite set of states, $\Sigma$ is an alphabet, $\delta\subseteq Q\times
\Sigma\times Q$ is a transition relation, $q_0\in Q$ is the initial state, and
$F\subseteq Q$ is the set of final states. A {\em run} of $\cA$ on a word $w = a_1 a_2 a_3 \cdots a_n$ is a sequence of states $q_0, q_1, \cdots, q_n$
such that $(q_i, a_{i+1}, q_{i+1}) \in \delta$. A run is {\em accepting} if the
last state $q_n\in F$. A word is {\em accepted} by $\cA$ if it has an accepting
run. The {\em language} of $\cA$, denoted by $\langA$, is the set of word accepted by $\cA$. 
A language is {\em regular} if it can be recognised by a finite automaton.
$\cA$ is a \emph{deterministic finite automaton} (DFA) if 
$|\{q'\mid (q, a,q')\in \delta\}| \leq 1$ for each $q\in Q$ and $a\in \Sigma$.

Let $\Sigma_\lambda=\Sigma\cup \{\lambda\}$.
A {\em (finite) transducer} is a tuple $\cT=(Q,\Sigma_\lambda, \delta, q_0, F)$ where $Q$ is a finite set of states, $\delta\subseteq Q\times\Sigma_\lambda\times \Sigma_\lambda\times Q$ is a transition relation, $q_0\in Q$ is the initial state, and $F\subseteq Q$ is the set of final states. 
We say that $\cT$ is {\em length-preserving} if $\delta \subseteq Q\times\Sigma\times \Sigma\times Q$. 
We define relation $\delta^* \subseteq Q\times \Sigma^*\times \Sigma^* \times Q$ as the smallest relation satisfying (1) $(q,\lambda,\lambda,q) \in \delta^*$ for any $q \in Q$ and (2) $(q_1,x,y,q_2) \in \delta^* \wedge (q_2,a,b,q_3) \in\delta \implies (q_1,xa,yb,q_3)\in \delta^*$. 
The relation represented by $\cT$ is the set $\{(x,y)\mid (q_0,x,y,q) \in \delta^* \wedge q\in F\}$.
A relation is {\em regular and length-preserving} if it can be represented by a length-preserving transducer. 

\vspace{-0.1cm}
\section{Regular model checking}\label{section:rmc}
\vspace{-0.1cm}
{\em Regular model checking} (RMC) is a uniform framework for modelling and
automatically analysing parameterised concurrent systems. In the paper, we focus
on the regular model checking framework for safety properties. Under the framework, each {\em system configuration} is represented as a word in $\Sigma^*$. The sets of {\em initial configurations} and of {\em bad configurations} are captured by regular languages over $\Sigma$. The {\em transition relation} is captured by a regular and length-preserving relation on $\Sigma^*$. We use a triple $(\cI,\cT,\cB)$ to denote a {\em regular model checking problem}, where $\cI$ is an FA recognizing the set of initial configurations, $\cT$ is a transducer representing the transition relation, and $\cB$ is an FA recognizing the set of bad configurations. 
Then the regular model checking problem $(\cI, \cT, \cB)$ asks if
$\langT^*(\langI) \cap \langB =\emptyset$. A standard way to prove
$\langT^*(\langI) \cap \langB =\emptyset$ is to find a proof based on a set $V$
satisfying the following three conditions:
(1)$\langI\subseteq V$ (i.e. all initial configurations are contained in $V$), 
(2) $V\cap \langB=\emptyset$ (i.e. $V$ does not contain bad
configurations), (3) $\langT(V) \subseteq V$ (i.e. $V$ is {\em inductive}:
applying $T$ to any configuration in $V$ does not take it outside $V$).
\OMIT{
\begin{itemize}
	\item $\langI\subseteq V$: all initial configurations are contained in $V$.
	\item $V\cap \langB=\emptyset$: $V$ is disjoint with the set of bad configurations.
	\item $\langT(V) \subseteq V$: $V$ is {\em inductive}, i.e., the successors of any configuration in $V$ following the transition relation $\langT$ are 
            still contained in $V$.
\end{itemize}}
We call the set $V$ an {\em inductive invariant} for the regular model checking
problem $(\cI,\cT,\cB)$. In the framework of regular model checking, 
a standard method for proving safety (e.g. see \cite{Neider13,rmc-survey})
is to find a {\em regular proof}, i.e., 
an inductive invariant that can be captured by finite automaton. Because regular
languages are effectively closed under Boolean operations and taking
pre-/post-images w.r.t.\ finite transducers, an algorithm for verifying  whether
a given regular language is an inductive invariant can be obtained by using
language inclusion algorithms for FA~\cite{abdulla-antichain,BP13}. 

\begin{example}[Herman's Protocol]\label{example:herman}
    Herman's Protocol is a {\em self-stabilising} protocol for $n$ processes
    (say with ids $0,\ldots,n-1$)
    organised as a ring structure. A {\em configuration} in the  Herman's Protocol is {\em correct} iff only one process has a token.
The protocol ensures that any system configuration where the processes collectively holding any odd number of tokens will almost surely be recovered to a correct configuration.
More concretely, the protocol works iteratively. In each iteration, the scheduler randomly chooses a process. 
If the process with the number $i$ is chosen by the scheduler, it will toss a coin to decide whether to keep the token or pass the token to the next process, i.e. the one with the number $(i+1)\%n$. If a process holds two tokens in the same iteration, it will discard \emph{both} tokens. One safety 
property the protocol guarantees is that every system configuration has at least one token.

The protocol and the corresponding safety property can be modelled as a regular model checking problem $(\cI, \cT, \cB)$. Each process has two states; the symbol \T\ denotes the state that the process has a token and \N\ denotes the state that the process does not have a token. The word \N\N\T\T\N\N\ denotes a system configuration with six processes, where only the processes with numbers $2$ and $3$ are in the state with tokens.
The set of initial configurations is $\langI=\N^*\T(\N^*\T\N^*\T\N^*)^*$, i.e.,
    an odd number of processes has tokens. The set of bad configuration is
    $\langB=\N^*$, i.e., all tokens have disappeared. We use the regular
    language $\langE= ((\T,\T)+(\N,\N))$ to denote the relation that a process
    is idle. The transition relation
    $\langT$ can be specified as a union of the following regular expressions:
        (1) $\langE^*$ [{\it Idle}],
        (2) $\langE^*(\T,\N)(\T,\N)\langE^*+(\T,\N)\langE^*(\T,\N)$ 
            [{\it Discard both tokens}], and
        (3) $\langE^*(\T,\N)(\N,\T)\langE^*$+ $(\N,\T)\langE^*(\T,\N)$  
            [{\it Pass the token}].
\end{example}

\vspace{-0.2cm}
\section{Automata Learning}\label{section:learning}
\vspace{-0.2cm}
Suppose $R$ is a regular \emph{target} language whose definition is
not directly accessible.
\emph{Automata learning}
algorithms~\cite{angluin1987learning,rivest:inference1993,kearns:introduction1994,bollig:angluin2009}
automatically infer a~FA~$\cA$ recognising~$R$.
The setting of an online learning algorithm
assumes a~\emph{teacher} who has access to $R$ and can answer the
following two queries: (1) Membership query $\MEM(w)$: is the word $w$ a~member of $R$, i.e., $w \in R$? (2) Equivalence query $\EQ(\cA)$: is the language of FA~$\cA$ equal to
$R$, i.e., $\langA = R$?
If not, it returns a counterexample $w \in \langA \ominus R$.
\hide{
\begin{itemize}
	\item Membership query $\MEM(w)$: is the word $w$ a~member of $R$, i.e., $w \in R$?
	\item Equivalence query $\EQ(\cA)$: is the language of FA~$\cA$ equal to
	$R$, i.e., $\langA = R$?
	If not, what is~a counterexample to this equality, i.e., a word  $w \in \langA \ominus R$?
\end{itemize}
}
The learning algorithm will then construct a~FA $\cA$ such
that $\langA = \langR$ by interacting with the teacher.
Such an algorithm works iteratively: In each iteration, it
performs membership queries to get from the teacher information about $R$.
Using the results of the queries, it proceeds by
constructing a~candidate automaton $\cA_h$ and makes an
equivalence query $\EQ (\cA_h)$. If $\langA_h = \langR$, the algorithm terminates
with $\cA_h$ as the resulting FA.
Otherwise, the teacher returns a word $w$ distinguishing $\langA_h$ from~$\langR$.
The learning algorithm uses $w$ to
refine the candidate automaton of the next iteration. In the last decade, automata learning algorithms have been frequently applied to solve formal verification and synthesis problems, c.f.,~\cite{ChenHLLTWW16,chapman:learning2015,habermehl2005regular,grinchtein2006inferring,ChenFCTW09,FarzanCCTW08}.

More concretely, below we explain the details of the automata learning algorithm proposed by Rivest and Schapire~\cite{rivest:inference1993} (RS), which is an improved version of the classic $L^*$ learning algorithm by Angluin~\cite{angluin1987learning}. 
The foundation of the learning algorithm is the Myhill-Nerode theorem, from which one can infer that the states of the minimal DFA recognizing $\langR$ are isomorphic to the set of equivalence classes defined by the following relations:
$x \equiv_\langR y \mbox{ iff } \forall z\in \Sigma^*: xz\in \langR \leftrightarrow yz\in \langR.$
Informally, two strings $x$ and $y$ belong to the same state of the minimal DFA recognising $\langR$ iff they cannot be distinguished by any suffix $z$. In other words, if one can find a suffix $z'$ such that $xz' \in \langR$ and $yz' \notin \langR$ or vice versa, then $x$ and $y$ belong to different states of the minimal DFA.

The algorithm uses a data structure called {\em observation table} $(S, E, T)$ to find the equivalence classes correspond to $\equiv_\langR$, where $S$ is a set of strings denoting the set of identified states, $E$ is the set of suffixes to distinguish if two strings belong to the same state of the minimal DFA, and $T$ is a mapping from $(S\cup (S\cdot \Sigma))\cdot E$ to $\{\top,\bot\}$. The value of $T(w)=\top$ iff $w\in \langR$.
We use $\mathit{row}_E(x) = \mathit{row}_E(y)$ as a shorthand for $\forall z\in E: T(xz)=T(yz)$. That is, the strings $x$ and $y$ cannot be identified as two different states using only strings in the set $E$ as the suffixes. Observe that $x \equiv_\langR y$ implies $\mathit{row}_E(x) = \mathit{row}_E(y)$ for all $E\subseteq \Sigma^*$.
We say that an observation table is {\em closed} iff $\forall x\in S, a \in \Sigma: \exists y\in S: \mathit{row}_E(xa)=\mathit{row}_E(y)$. 
Informally, with a closed table, every state can find its successors wrt. all symbols in $\Sigma$.
Initially, $S=E=\{\lambda\}$, and $T(w)=\MEM(w)$ for all $w\in \{\lambda\}\cup \Sigma$.

\begin{algorithm}[h]
	\KwIn{A teacher answers $\MEM(w)$ and $\EQ(\cA)$ about a target regular language $R$ and the initial observation table $(S,E,T)$.}
	\Repeat{$\EQ(\cA_h)=\ltrue$}{
	\While{$(S,E,T)$ is not closed}{
		Find a pair $(x,a)\in S\times \Sigma$ such that $\forall y\in S: \mathit{row}_E(xa)\neq \mathit{row}_E(y)$.
		Extend $S$ to $S\cup \{xa\}$ and update $T$ using membership queries accordingly\;
	}
	Build a candidate DFA $\cA_h= (S, \Sigma, \delta, \lambda, F)$, where $\delta=\{(s,a,s')\mid s,s'\in S \wedge \mathit{row}_E(sa)=\mathit{row}_E(s)\}$, the empty string $\lambda$ is the initial state, and $F=\{s\mid T(s)=\top \wedge s\in S \}$\;
	\lIf{$\EQ(\cA_h)=(\lfalse, w)$, where $w \in \langA \ominus \langR$}{
		Analyse $w$ and add a suffix of $w$ to $E$}
	}
	\KwRet $\cA_h$ is the minimal DFA for $R$\;
	\caption{The improved $L^*$ algorithm by Rivest and Schapire}\label{alg:lstar}
\end{algorithm}

The details of of the improved $L^*$ algorithm by Rivest and Schapire can be found in Algorithm~\ref{alg:lstar}.
Observe that, in the algorithm, two strings $x,y$ with $x\equiv_\langR y$ will never be simultaneously contained in the set $S$. When the equivalence query $\EQ(\cA)$ returns $\lfalse$ together with a counterexample $w\in  \langA \ominus \langR$, the algorithm will perform a binary search over $w$ using membership queries to find a suffix $e$ of $w$ and extend $E$ to $E\cup \{e\}$. The suffix $e$ has the property that $\exists x,y\in S, a\in \Sigma: \mathit{row}_E(xa)=\mathit{row}_E(y) \wedge \mathit{row}_{E\cup \{e\}}(xa) \neq \mathit{row}_{E\cup \{e\}}(y)$, that is, add $e$ to $E$ will identify at least one more state. The existence of such a suffix is guaranteed. We refer the readers to~\cite{rivest:inference1993} for the proof.

\begin{proposition}\label{thm:lstar}\cite{rivest:inference1993}
Algorithm~\ref{alg:lstar} will find the minimal DFA $\cR$ for $\langR$ using at most $n$ equivalence queries and $n(n+n|\Sigma|) + n\log m$ membership queries, where $n$ is the number of state of $\cR$ and $m$ is the length of the longest counterexample returned from the teacher.
\end{proposition}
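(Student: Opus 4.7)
The plan is to prove the two bounds by tracking two quantities throughout the execution: the number of pairwise Myhill--Nerode distinct strings in $S$, and the total size of the observation table. First, I would establish the invariant that at every moment in the algorithm no two strings in $S$ are $\equiv_\langR$-equivalent. The initial $S=\{\lambda\}$ trivially satisfies this, and whenever the closure step adds $xa$ to $S$ it does so because $\mathit{row}_E(xa)\neq \mathit{row}_E(y)$ for every $y\in S$; the witnessing suffix $e\in E$ already shows $xa\not\equiv_\langR y$. Hence $|S|\leq n$ throughout the execution, where $n$ is the number of states of the minimal DFA $\cR$.

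Next I would bound the equivalence queries. When $\EQ(\cA_h)$ returns a counterexample $w\in \langA_h\ominus \langR$, the Rivest--Schapire binary search produces a suffix $e$ of $w$ with the property, highlighted right before the statement, that $\mathit{row}_{E\cup\{e\}}(xa)\neq \mathit{row}_{E\cup\{e\}}(y)$ for some $x,y\in S$ and $a\in\Sigma$ that were previously conflated. Thus after adding $e$ to $E$ and re-closing the table, $|S|$ strictly increases by at least one. Combined with $|S|\leq n$, this yields at most $n-1$ failed equivalence queries and hence at most $n$ equivalence queries overall. As a by-product, $|E|$ grows by exactly one per failed query and therefore $|E|\leq n$ as well.

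Finally I would count the membership queries. Assuming the teacher's answers are cached (which is standard), the total number of distinct membership queries is bounded by the number of cells of the observation table plus the queries performed during the binary searches. The table has entries indexed by $(S\cup S\cdot\Sigma)\cdot E$, and using $|S|\leq n$ and $|E|\leq n$ we obtain $|S\cup S\cdot\Sigma|\cdot |E| \leq n(1+|\Sigma|)\cdot n = n(n+n|\Sigma|)$. Each of the at most $n$ equivalence queries triggers one binary search over a counterexample of length at most $m$, contributing $\lceil\log m\rceil$ membership queries, for a further $n\log m$ in total. Summing gives the stated bound $n(n+n|\Sigma|)+n\log m$.

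The main obstacle is the equivalence-query argument: one has to justify rigorously that the suffix $e$ extracted by binary search always separates a pair of previously merged rows, so that $|S|$ strictly increases. Rather than reprove this, I would cite the corresponding lemma of Rivest and Schapire~\cite{rivest:inference1993} and invoke it directly; the rest of the argument is then a straightforward combination of the $|S|\leq n$ and $|E|\leq n$ bounds with the size of the observation table.
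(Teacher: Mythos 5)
Your proposal is correct and follows essentially the same route as the paper's own (much terser) justification: bound the failed equivalence queries by showing each one strictly increases the number of identified states, which is capped at $n$ by Myhill--Nerode, then count membership queries as table cells plus binary-search queries, citing Rivest and Schapire for the key separating-suffix property. You merely spell out the details the paper leaves implicit (the pairwise-inequivalence invariant on $S$, the bound $|E|\leq n$, and the explicit cell count), so there is nothing substantively different to compare.
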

Because each equivalence query with a $\lfalse$ answer will increase the size (number of states) of the candidate DFA by at least one and the size of the candidate DFA is bounded by $n$ according to the Myhill-Nerode theorem, the learning algorithm uses at most $n$ equivalence queries. The number of membership queries required to fill in the entire observation table is bounded by $n(n+n|\Sigma|)$. Since a binary search is used to analyse the counterexample and the number of counterexample from the teacher is bounded by $n$, the number of membership queries required is bounded by $n\log m$.

We would like to introduce the other two important variants of the $L^*$ learning algorithm.
The algorithm proposed by Kearns and Vazirani~\cite{kearns:introduction1994} (KV) uses a {\em classification tree} data structure to replace the observation table data structure of the classic $L^*$ algorithm. The algorithm of Kearns and Vazirani has a similar query complexity to the one of Rivest and Schapire~\cite{rivest:inference1993}; it uses at most $n$ equivalence queries and $n^2(n|\Sigma|+m)$ membership queries. However, the worst case bound of the number of membership queries is very loose. It assumes the structure of the classification tree is linear, i.e., each node has at most one child, which happens very rarely in practice. In our experience, the algorithm of Kearns and Vazirani usually requires a few more equivalence queries, with a significant lower number of membership queries comparing to Rivest and Schapire when applied to verification problems.

The $\mathit{NL}^*$ algorithm~\cite{bollig:angluin2009} learns a non-deterministic finite automaton instead of a deterministic one. More concretely, it makes use of a canonical form of nondeterministic finite automaton, named {\em residual finite-state automaton (RFSA)} to express the target regular language. In some examples, RFSA can be exponentially more succinct than DFA recognising the same languages. In the worst case, the $\mathit{NL}^*$ algorithm uses $O(n^2)$ equivalence queries and $O(m|\Sigma|n^3)$ membership queries to infer a canonical RFSA of the target language.

\hide{
Other online automata learning algorithms have similar guarantee on the required number of queries. That is, they using at most $n$ equivalence queries and a the number of membership queries polynomial

~\cite{angluin1987learning,rivest:inference1993,kearns:introduction1994,bollig:angluin2009}}


\section{Algorithm}
\label{section:algorithm}



\begin{figure}[tb]
	\centering
\scalebox{0.65}{
		\begin{tikzpicture}[punkt/.style={rectangle, rounded corners,
			draw=black, very thick, text width=5em,
			minimum height=8em, text centered}]
		\node[punkt] (learner) {};
		\node[align=left] (learner_text) at ($(learner) + (0, 1cm)$) {Learner};
		\node[punkt, minimum height=12em, text width=11em, right=2.6cm of learner] (teacher) {};
		\node[align=left] (teacher_text) at ($(teacher) + (0, 1.6cm)$) {Teacher};
		\node[align=center, right=0.5cm of teacher] (teacher_right) {$(\cI,\cT,\cB)=\ltrue$ and\\ an inductive\\ invariant $\cA$\\ or\\ $(\cI,\cT,\cB)=\lfalse$ and \\ a word $w\in \langT^*(\langI) \cap \langB$ };
		\node[rectangle, minimum height=2em, rounded corners, draw=black, text centered] (teacher_mem) at ($(teacher)+(0,2.5em)$) {$w\in \langT^*(\langI)?$};
		\node[align=left, rectangle, minimum height=6em, rounded corners, draw=black] (teacher_equ) at ($(teacher)+(0,-2.5em)$) {$(1) \langI\subseteq \langA_h?$\\ $(2) \langA_h\cap \langB=\emptyset?$\\$(3) \langT(\langA_h)\subseteq \langA_h?$};
		
		\draw ($(learner.east)+(0, 3em)$) 
		edge[->] node[above,pos=0.4] { $\MEM(w)$ }
		($(teacher_mem.west)+(0, 0.5em)$);
		\draw ($(learner.east)+(0, 2em)$) 
		edge[<-] node[below,pos=0.4] { $\mathit{yes}$ or $\mathit{no}$ }
		($(teacher_mem.west)+(0, -0.5em)$);
		\draw ($(learner.east)+(0,-2em)$) 
		edge[->] node[above,pos=0.43] { $\EQ(\cA_h)$ }
		($(teacher_equ.west)+(0,+0.5em)$);
		\draw ($(learner.east)+(0,-3em)$) 
		edge[<-] node[below,pos=0.43] { $\lfalse, w$ }
		($(teacher_equ.west)+(0,-0.5em)$);
		\draw ($(teacher_right.west)+(1.5em,0)$)
		edge[<-] 
		($(teacher.east)$) ;
		
		\end{tikzpicture}
	}
	\caption{Overview: using automata learning to solve the regular model checking problem $(\cI,\cT,\cB)$. Recall that we use calligraphy font for automata/transducers and roman font for the corresponding languages/relations.}
	\label{figure:overview}\vspace{0.1cm}
\end{figure}
We apply automata learning algorithms, including Angluin's $L^*$ and its variants, to solve the regular model checking problem $(\cI,\cT,\cB)$. Those learning algorithms require a teacher answering both equivalence and membership queries. 
Our strategy is to design a ``strict teacher'' targeting the minimal inductive invariant $\langT^*(\langI)$. For a membership query on a word $w$, the teacher checks if $w\in \langT^*(\langI)$, which is decidable under the assumption that $\cT$ is length-preserving. For an equivalence query on a candidate FA $\cA_h$, the teacher analyses if $\cA_h$ can be used as an inductive invariant in a proof of the problem $(\cI,\cT,\cB)$. It performs one of the following actions depending on the result of the analysis (Fig.~\ref{figure:overview}):
\begin{itemize}
\item Determine that $\cA_h$ does not represent an inductive
  invariant, and return $\lfalse$ together with an explanation $w \in
  \Sigma^*$ to the learner.
\item Conclude that $(\cI,\cT,\cB)=\ltrue$, and terminate the learning
  process with an inductive invariant $\cA_h$ as the proof.
\item Conclude that $(\cI,\cT,\cB)=\lfalse$, and terminate the
  learning with a word $w\in \langT^*(\langI)\cap \langB$ as
  an evidence.
\end{itemize}

Similar to the typical regular model checking approach, our learning-based technique tries to find a ``regular proof'', which amounts to finding an inductive invariant in the form of a regular language. Our approach is incomplete in general since it could happen that there only non-regular inductive invariants exist. Pathological cases where only non-regular inductive invariant exist do not, however, seem to occur frequently in practice, c.f.,~\cite{bouajjani2004abstract,habermehl2005regular,bouajjani2006abstract,RMC,TL10,rmc-thesis,Lin12-fsttcs}.


Answering a membership query on a word $w$, i.e., checking whether $w\in \langT^*(\langI)$, is the easy part: since $\cT$ is length-preserving, we can construct an FA recognising $\mathit{Post}^{|w|} = \{w' \mid |w'|=|w| \wedge w'\in \langT^*(\langI)\}$ and then check if $w\in \mathit{Post}^{|w|}$. In practice, $\mathit{Post}^{|w|}$ can be efficiently computed and represented using BDDs and symbolic model checking.

For an equivalence query on a candidate FA $\cA_h$, we need to check if $\cA_h$ can be used as an inductive invariant for the regular model checking problem $(\cI,\cT,\cB)$.
More concretely, we check the three conditions (1) $\langI\subseteq \langA_h$, (2) $\langA_h\cap \langB=\emptyset$, and (3) $\langT(\langA_h)\subseteq \langA_h$ using Algorithm~\ref{alg:equivalence}. 

\begin{algorithm}
	\KwIn{An FA $\cA_h$ and an RMC problem $(\cI,\cT,\cB)$}
	\If{$\langI \not \subseteq \langA_h$}{
		Find a word $w\in \cI \setminus \langA_h$\;
		\KwRet ($\lfalse$, $w$) to the learner\;
	}	
	\ElseIf{$\langA_h \cap \langB \neq \emptyset$}{
		Find a word $w \in \langA_h \cap \langB$\;
		\lIf{$w\in \langT^*(\langI)$}{
			Output \{$cex=w$, $(\cI,\cT,\cB)=\lfalse$\} and halt%
		}
		\lElse{
			\KwRet ($\lfalse$, $w$) to the learner%
		}
	}	
	\ElseIf{$\langT(\langA_h)\not\subseteq \langA_h$}{
		Find a pair of words $(w,w')\in \langT$ such that $w \in \langA_h $ but $ w' \notin  \langA_h$\;
		\lIf{$w\in \langT^*(\langI)$}{
			\KwRet ($\lfalse$, $w'$) to the learner%
		}
		\lElse{
			\KwRet ($\lfalse$, $w$) to the learner%
		}		
	}	
	\lElse{
		Output \{$inv=\cA_h$, $(\cI,\cT,\cB)=\ltrue$\} and halt%
	}
	
	\caption{Answer equivalence query on candidate FA}\label{alg:equivalence}
\end{algorithm}

If the condition (1) is violated, i.e., $\langI \not \subseteq \langA_h$, there is a word $w\in \langI \setminus \langA_h$. Since $\langI\subseteq \langT^*(\langI)$, the teacher can infer that $w \in  \langT^*(\langI) \setminus \langA_h$ and return $w$ as a {\em positive} counterexample to the learner. A counterexample is positive if it represents a word in the target language that was missing in the candidate language. The definition negative counterexamples is symmetric.

If the condition (2) is violated, i.e., $\langA_h \cap \langB \neq \emptyset$, there is a word $w \in \langA_h \cap \langB$. The teacher checks if $w\in \langT^*(\langI)$ by constructing $\mathit{Post}^{|w|}$ and checking if $w\in \mathit{Post}^{|w|}$. If $w\not\in \langT^*(\langI)$, the teacher obtains that $w \in  \langA_h \setminus \langT^*(\langI)$ and returns $\lfalse$ together with $w$ as a negative counterexample to the learner. Otherwise, the teacher infers that $w \in  \langT^*(\langI) \cap \langB$ and outputs $(\cI,\cT,\cB)=\lfalse$ with the word $w$ as an evidence.

The case that the condition (3) is violated, i.e., $\langT(\langA_h)\not\subseteq \langA_h$, is more involved. There exists a pair of words $(w,w')\in \langT$ such that $w \in \langA_h \wedge w' \notin  \langA_h$. The teacher will check if $w\in \langT^*(\langI)$. If it is, then the teacher knows that $w'\in \langT^*(\langI)\wedge w' \notin  \langA_h$ and hence returns $\lfalse$ together with $w'$ as a positive counterexample to the learner.
If $w\notin \langT^*(\langI)$, then the teacher knows that $w\notin \langT^*(\langI)\wedge w \in  \langA_h$ and hence returns $\lfalse$ together with $w$ as a negative counterexample to the learner.

If all conditions hold, the ``strict teacher'' shows its generosity ($\langA_h$ might not equal to $\langT^*(\langI)$, but it will still pass) and concludes that $(\cI,\cT,\cB)=\ltrue$ with a proof using $\cA_h$ as the inductive invariant.

\begin{theorem}[Correctness]
  If the algorithm from Fig.~\ref{figure:overview} terminates,
  it gives correct answer to the RMC problem $(\cI,\cT,\cB)$.
\end{theorem}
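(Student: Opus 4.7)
The plan is to do a case analysis on the two possible ways the algorithm can halt, as specified at the end of Algorithm~\ref{alg:equivalence}: either it outputs $\ltrue$ with a candidate inductive invariant $\cA_h$, or it outputs $\lfalse$ with a witness word $w$. The learner itself never halts on its own; it only produces candidate automata to which the teacher responds. Consequently, every terminating run corresponds to an execution of Algorithm~\ref{alg:equivalence} that reaches one of its two ``output and halt'' branches.

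For the $\lfalse$ case, I would inspect the unique branch in Algorithm~\ref{alg:equivalence} that halts with $(\cI,\cT,\cB)=\lfalse$. It fires only when a word $w\in \langA_h \cap \langB$ has been found and the subsidiary check $w\in \langT^*(\langI)$ succeeded. Hence $w\in \langT^*(\langI)\cap \langB$, which by definition of the RMC problem certifies unsafety, and the reported answer is correct. Observe that the correctness of the subsidiary check relies on the length-preserving assumption on $\cT$, which makes $\mathit{Post}^{|w|}$ computable by a finite-state reachability analysis, as described in Section~\ref{section:algorithm}.

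For the $\ltrue$ case, the halting branch is reached only after the conditions $\langI\subseteq \langA_h$, $\langA_h\cap \langB=\emptyset$, and $\langT(\langA_h)\subseteq \langA_h$ have all been verified (these inclusions are decidable on regular languages via finite transducers and standard language-inclusion algorithms as noted in Section~\ref{section:rmc}). Thus $\cA_h$ satisfies the three defining properties of an inductive invariant. A straightforward induction on $n$ using $\langI\subseteq \langA_h$ as the base case and $\langT(\langA_h)\subseteq \langA_h$ as the inductive step yields $\langT^n(\langI) \subseteq \langA_h$ for every $n\in\Nat$, and therefore $\langT^*(\langI)\subseteq \langA_h$. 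Combining with $\langA_h\cap \langB=\emptyset$ gives $\langT^*(\langI)\cap \langB = \emptyset$, which is exactly the safety claim of $(\cI,\cT,\cB)$.

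There is essentially no obstacle here: correctness follows from the soundness of the three checks in Algorithm~\ref{alg:equivalence} together with the standard inductive-invariant argument. The only subtlety worth emphasising is that the ``strict but generous'' teacher may accept an $\cA_h$ strictly larger than $\langT^*(\langI)$; this is unproblematic because the inductive-invariant conditions alone imply $\langT^*(\langI)\cap \langB=\emptyset$, regardless of how loose the over-approximation is. Termination itself is deliberately \emph{not} part of this theorem and will be treated separately.
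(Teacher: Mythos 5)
Your proposal is correct and follows essentially the same route as the paper, which simply observes that the algorithm outputs an inductive invariant when it concludes $\ltrue$ and a word in $\langT^*(\langI)\cap \langB$ when it concludes $\lfalse$. You merely spell out the details the paper leaves implicit (the case analysis on the halting branches of Algorithm~\ref{alg:equivalence} and the standard induction showing $\langT^n(\langI)\subseteq \langA_h$ for all $n$), so there is nothing further to add.
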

    To see this, observe that the algorithm provides an inductive invariant 
    when it concludes $(\cI,\cT,\cB)=\ltrue$ and a word in $\langT^*(\langI)\cap \langB$ when it concludes $(\cI,\cT,\cB)=\lfalse$.
%
%
    In addition, if one of the $L^*$ learning algorithms\footnote{If $NL^*$ is used, the bound in Theorem~\ref{thm:tem} 
will increase to $O(k^2)$.} from Section~\ref{section:learning} is used, we can obtain an additional
result about termination:
\begin{theorem}[Termination]\label{thm:tem}
	When $\langT^*(\langI)$ is regular, the algorithm from Fig.~\ref{figure:overview} is guaranteed to terminate in at most $k$ iterations, where $k$ is the size of the minimal DFA of $\langT^*(\langI)$.
\end{theorem}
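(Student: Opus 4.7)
The plan is to argue that, from the learner's viewpoint, the strict teacher in Algorithm~\ref{alg:equivalence} is indistinguishable from an honest teacher for the target language $R := \langT^*(\langI)$, so that the termination bound in Proposition~\ref{thm:lstar} applies verbatim with $n = k$.

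First I would verify that every answer the teacher provides is consistent with $R$. Membership queries are decided by computing $\mathit{Post}^{|w|}$ and checking $w\in\mathit{Post}^{|w|}$, which is exactly the predicate $w\in R$; so membership answers are honest. For equivalence queries, I would go case-by-case through Algorithm~\ref{alg:equivalence} and check that every returned counterexample $w$ lies in $\langA_h \ominus R$: (i) when $\langI\not\subseteq \langA_h$, the returned $w\in\langI\setminus\langA_h \subseteq R\setminus\langA_h$; (ii) when $\langA_h\cap\langB\neq\emptyset$ and the algorithm does not halt, we have $w\in\langA_h$ and $w\notin R$, so $w\in\langA_h\setminus R$; (iii) when $\langT(\langA_h)\not\subseteq\langA_h$ with transducer witness $(w,w')$, either $w\in R$ (so $w'\in R\setminus \langA_h$ since $R$ is closed under $\langT$) or $w\notin R$ (so $w\in\langA_h\setminus R$). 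In each subcase the returned word is a legitimate counterexample relative to the target $R$.

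Given this consistency, I would then lift Proposition~\ref{thm:lstar} directly: the trace of membership and equivalence queries/answers produced in a run of our system could equally well have been produced by running the $L^*$ variant against an honest teacher for $R$. Since $R$ is assumed regular with minimal DFA of size $k$, that reference run terminates after at most $k$ equivalence queries, and each iteration of the outer loop of Fig.~\ref{figure:overview} corresponds to exactly one equivalence query. Hence our loop also terminates within $k$ iterations. (The ``generous'' acceptance in the final else-branch of Algorithm~\ref{alg:equivalence} can only cause earlier termination, never later, since any inductive invariant containing $\langI$ and disjoint from $\langB$ is accepted whether or not it equals $R$; the bound is preserved.)

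The one subtle step, and the main thing to get right carefully, is case (iii): we must justify that whenever the teacher returns $w'$, we indeed have $w'\notin\langA_h$ and $w'\in R$; the former is immediate from the choice of $(w,w')$, and the latter uses that $R$ is $\langT$-closed together with $w\in R$. The rest is essentially bookkeeping: confirming that the observation-table / classification-tree invariants maintained by the learner remain valid given only the promises we established (that each returned $w$ lies in $\langA_h\ominus R$ and that membership answers agree with $R$), and invoking Proposition~\ref{thm:lstar} as a black box. For the $NL^*$ variant, the same argument applies but the bound becomes $O(k^2)$ as noted in the footnote, since the canonical RFSA can have up to $O(k^2)$ states relative to the minimal DFA.
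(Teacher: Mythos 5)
Your proposal is correct and follows essentially the same route as the paper: the paper's (much terser) proof likewise observes that every counterexample returned by the strict teacher lies in the symmetric difference of the candidate language and $\langT^*(\langI)$ and that membership answers agree with this target, and then invokes Proposition~\ref{thm:lstar} to bound the number of iterations by $k$. Your case analysis of Algorithm~\ref{alg:equivalence} simply makes explicit the one-line observation the paper relies on.
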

\begin{proof}
Observe that in the algorithm, the counterexample obtained by the learner in each iteration locates in the symmetric difference of the candidate language and $\langT^*(\langI)$. Hence, when $\langT^*(\langI)$ can be recognized by a DFA of $k$ states, the algorithm will not execute more than $k$ iterations by Proposition \ref{thm:lstar}. 
\end{proof}

\vspace{-2.9mm}

Two remarks are in order. Firstly, the set $\langT^*(\langI)$ tends to be
regular in practice, e.g.,
see~\cite{bouajjani2004abstract,habermehl2005regular,bouajjani2006abstract,RMC,TL10,rmc-thesis,fast,BFLS05,Lin12-fsttcs,LS05}.
In fact, it is known that $\langT^*(\langI)$ is regular for many subclasses
of infinite-state systems that can be modelled in regular model checking
\cite{TL10,Lin12-fsttcs,Iba78,BFLS05,LS05}
including pushdown systems, reversal-bounded counter systems, two-dimensional
VASS (Vector Addition Systems with States), and other subclasses of counter
systems. Secondly, even in the case when $\langT^*(\langI)$ is not regular,
termination may still happen due to the ``generosity'' of the teacher, which
will accept any inductive invariant as an answer.

\OMIT{
\begin{figure}
\centering
\begin{tabular}{c|c}
	& $\lambda$\\
	\hline
	$\lambda$	     & $\bot$\\
	$\T$				     & $\top$\\
	\hline
	$\N$				     & $\bot$\\
	$\T\T$				& $\bot$\\
	$\T\N$				& $\top$
\end{tabular}
\adjustbox{valign=M}{
\begin{tikzpicture}
\tikzset{every state/.style={minimum size=15pt}}
\node[initial,state] (A) {};
\node[state,accepting] (B) [below =2em of A] {};
\tikzset{loop/.style={in=30,out=330, looseness=5}}
\draw[->,loop right] (A) to node[right] {$\N$} (A);
\tikzset{loop/.style={in=30,out=330, looseness=5}}
\draw[->,loop right] (B) to node[right] {$\N$} (B);
\draw[->, bend left] (A) to node[right]{$\T$}(B);
\draw[->, bend left] (B) to node[left]{$\T$}(A);
\end{tikzpicture}}
\caption{Using learning to solve the RMC problem $(\cI,\cT,\cB)$ of Herman's Protocol. The table on the left is the content of the {\em observation table} used by the automata learning algorithm of Rivest and Shaphire~\cite{rivest:inference1993} and the automaton on the right is the inferred candidate DFA.}\label{figure:learning-token-passing}
\end{figure}

Note, though, that the termination may still happen earlier even in the case 
when $\langT^*(\langI)$ is not regular.
The theorem can be proved by contradiction. Assume that our learning-based algorithm reached iteration $k+1$. It follows that the learner received $k+1$ counterexamples. Observe that all counterexamples returned to the teacher are in the symmetric difference of $\langT^*(\langI)$ and the candidate language $\langA_h$. By the property of automata learning algorithms (check Section~\ref{section:learning}), the learning algorithm will propose a candidate FA $\cA_f$ recognizing $\langT^*(\langI)$ before the $(k+1)$th iteration. Since $\langA_f=\langT^*(\langI)$, it holds that $\langI\subseteq \langA_f$ and $\langT(\langA_f)\subseteq \langA_f$. If $\langA_f \cap \langB =\emptyset$, the teacher will terminate and return $(\cI,\cT,\cB)=\ltrue$ in addition with a proof using $\cA_f$ as the inductive invariant. Otherwise, it will terminate and return $(\cI,\cT,\cB)=\lfalse$ in addition with a word in $\langT^*(\langI) \cap \langB$ as evidence. Then we find a contradiction because the algorithm will terminate at the iteration when an equivalence query on $\cA_f$ was posed to the teacher, which is before the $(k+1)$th iteration.

\begin{example}[RMC of the Herman's Protocol]
	Consider the RMC problem of Herman's Protocol in Example~\ref{example:herman}. Initially, several membership queries will be posed to the teacher to produce the closed observation table on the left of Fig.~\ref{figure:learning-token-passing}. In this example, the teacher returns $\top$ only for words containing an odd number of the symbol \T. 
	The learner will then construct the candidate FA $\cA_h$ on the right of Fig.~\ref{figure:learning-token-passing} and pose an equivalence query on $\cA_h$. Observe that $\cA_h$ can be used as an inductive invariant in a regular proof. It is easy to verify that $\langI = \N^*\T(\N^*\T\N^*\T\N^*)^* \subseteq \langA_h$ and $\langA_h \cap \langB =\langA_h\cap \N^* =\emptyset$. The condition (3) $\langT(\langA_h)\subseteq \langA_h$ of a regular proof can be proved to be correct based on the following observation:  the FA $\cA_h$ recognises exactly the set of all configurations with an odd number of tokens. When tokens are discarded in a transition, the total number of discarded tokens in all processes is always $2$. The other two types of transitions will not change the total number of tokens in the system. It follows that taking a transition from any configurations in $\langA_h$ will arrive a configuration with an odd number of tokens, which is still in $\langA_h$.
\end{example}

The verification of Herman's Protocol finishes after the first iteration of learning and hence we cannot see how the learning algorithm uses a counterexample for refinement. Below we introduce a slightly more difficult problem.

Israeli-Jalfon's Protocol is a routing protocol of $n$ processes organised in a ring-shaped topology, where a process may hold a token.
Again we assume the processes are numbered from $0$ to $n-1$. If the process with the number $i$ is chosen by the scheduler, it will toss a coin to decide  to pass the token to the left or right, i.e. the one with the number $(i-1)\%n$ or $(i+1)\%n$. When two tokens are held by the same process, they will be merged. The safety property of interest is that every system configuration has at least one token.
The protocol and the corresponding safety property can be modelled as a regular model checking problem $(\cI, \cT, \cB)$, together with
the set of initial configurations $\langI=(\T+\N)^*\T(\T+\N)^*\T(\T+\N)^*$, i.e., at least two processes have tokens, and the set of bad configurations $\langB=\N^*$, i.e., all tokens have disappeared. Again we use the regular language $\langE= ((\T,\T)+(\N,\N))$ to denote the relation that a process is idle, i.e, the process does not change its state. The transition relation $\langT$ then can be specified as a union of the regular expressions in Figure~\ref{fig:tr_israeli}.

\begin{figure*}[h]
\centering
\begin{tabular}{lcl}
	$\langE^*(\T,\N)((\T,\T)+(\N,\T))\langE^*$, $((\T,\T)+(\N,\T))\langE^*(\T,\N)$  & &({\it Pass the token right})\\
	$\langE^*((\T,\T)+(\N,\T))(\T,\N)\langE^*$, $(\T,\N)\langE^*((\T,\T)+(\N,\T))$  &  &({\it Pass the token left})
\end{tabular}
\caption{The Transition Relation of Israeli-Jalfon's Protocol}
\label{fig:tr_israeli}
\end{figure*}

When the automata learning algorithm of Rivest and Shaphire is applied to solve the RMC problem, we can obtain an inductive invariant with 4 states in 3 iterations.
The first candidate FA in Fig.~\ref{figure:learning-IJ}(a) is incorrect because it does not include the initial configuration \T\T. By analysing the counterexample \T\T, the learning algorithm adds the suffix \T\ to the set $E$. The second candidate FA in Fig.~\ref{figure:learning-IJ}(b) is still incorrect because it contains an unreachable bad configuration \N\N\N. The learning algorithm analyses the counterexample \N\N\N\ and adds the suffix \N\ to the set $E$. This time it obtains the candidate FA  in Fig.~\ref{figure:learning-IJ}(c), which is a valid regular inductive invariant.

\begin{figure*}[tb]
	\begin{minipage}[t]{0.25\textwidth}
		\centering
		\begin{tabular}{c|c}
	& $\lambda$\\
	\hline
	$\lambda$	     & $\bot$\\
	\hline
	$\T$				     & $\bot$\\
	$\N$				     & $\bot$\\
\end{tabular}
\adjustbox{valign=M}{
	\begin{tikzpicture}
	\tikzset{every state/.style={minimum size=15pt}}
	\node[initial above,state] (A) {};
	\tikzset{loop/.style={in=300,out=240, looseness=5}}
	\draw[->,loop right] (A) to node[below] {$\T,\N$} (A);
	\end{tikzpicture}}\\[2ex]
		(a) First candidate automaton
	\end{minipage}
	\begin{minipage}[t]{0.3\textwidth}
		\centering
		\begin{tabular}{c|c|c}
	& $\lambda$&\T\\
	\hline
	$\lambda$	     &$\bot$&$\bot$\\
	$\T$				     &$\bot$&$\top$\\
	$\T\T$				&$\top$&$\top$\\
	\hline
	$\N$				     &$\bot$&$\top$\\
	$\T\N$				&$\top$&$\top$\\
	$\T\T\T$				&$\top$&$\top$\\
	$\T\T\N$				&$\top$&$\top$\\
\end{tabular}
\adjustbox{valign=M}{
	\begin{tikzpicture}
	\tikzset{every state/.style={minimum size=15pt}}
	\node[initial above,state] (A) {};
	\node[state] (B) [below =2em of A] {};
	\node[state,accepting] (C) [below =2em of B] {};
	
	\tikzset{loop/.style={in=300,out=240, looseness=5}}
	\draw[->,loop right] (C) to node[below] {$\T,\N$} (C);
	\draw[->] (A) to node[right]{$\T,\N$}(B);
	\draw[->] (B) to node[right]{$\T,\N$}(C);
\end{tikzpicture}}\\
		(b) Second candidate automaton
	\end{minipage}
	\begin{minipage}[t]{0.45\textwidth}
		\centering
		\begin{tabular}{c|c|c|c}
	& $\lambda$&\T&\N\\
	\hline
	$\lambda$	     &$\bot$&$\bot$&$\bot$\\
	$\T$				     &$\bot$&$\top$&$\top$\\
	$\N$				     &$\bot$&$\top$&$\bot$\\
	$\T\T$				&$\top$&$\top$&$\top$\\
	\hline
	$\T\N$				&$\top$&$\top$&$\top$\\
	$\N\T$				&$\top$&$\top$&$\top$\\
	$\N\N$				&$\top$&$\top$&$\bot$\\
	$\T\T\T$				&$\top$&$\top$&$\top$\\
	$\T\T\N$				&$\top$&$\top$&$\top$\\
\end{tabular}
\adjustbox{valign=M}{
	\begin{tikzpicture}
	\tikzset{every state/.style={minimum size=15pt}}
	\node[initial,state] (A) {};
	\node[state] (B) [below left =2em of A] {};
	\node[state] (C) [below right =2em of A] {};
	\node[state,accepting] (D) [below =4em of A] {};
	
	\tikzset{loop/.style={in=300,out=240, looseness=5}}
	\draw[->,loop right] (D) to node[below] {$\T,\N$} (D);
	\tikzset{loop/.style={in=30,out=330, looseness=5}}
	\draw[->,loop right] (C) to node[right] {$\N$} (C);
	\draw[->] (A) to node[left]{$\T$}(B);
	\draw[->] (A) to node[right]{$\N$}(C);
	\draw[->] (B) to node[left]{$\T,\N$}(D);
	\draw[->] (C) to node[right]{$\T$}(D);

\end{tikzpicture}}\\[2ex]
		(c) Third candidate automaton
	\end{minipage}
	\caption{Using learning to solve the RMC problem $(\cI,\cT,\cB)$ of Israeli-Jalfon's Protocol. The table on the left of each sub-figure is the content of the {\em observation table} used by the automata learning algorithm of Rivest and Shaphire~\cite{rivest:inference1993}.}\label{figure:learning-IJ}
\end{figure*}
\begin{example}[RMC of the Israeli-Jalfon's Protocol]
\end{example}

}

\paragraph*{Considerations on Implementation}
The implementation of the learning-based algorithm is very simple. Since it is based on standard automata learning algorithms and uses only basic automata/transducer operations, one can find existing libraries for them. The implementation only need to take care of how to answer queries. The core of our implementation has only around 150 lines of code (excluding the parser of the input models).
We provide a few suggestions to make the implementation more efficient.
First, each time when an FA recognising $\mathit{Post}^k$ is produced, we store the pair $(k,\mathit{Post}^k)$ in a cache. It can be reused when a query on any word of length $k$ is posed. We can also check if $\mathit{Post}^k\cap \langB=\emptyset$. The algorithm can immediately terminate and return $(\cI,\cT,\cB)=\lfalse$ if $\mathit{Post}^k\cap \langB\neq\emptyset$. Second, for each language inclusion test, if the inclusion does not hold, we suggest to return the shortest counterexample. This heuristic helped to shorten the average length of strings sent for membership queries and hence reduced the cost of answering them. Recall that the algorithm needs to build the FA of $\mathit{Post}^k$ to answer membership queries. The shorter the average length of query strings is, the fewer instances of $\mathit{Post}^k$ have to be built.

\section{Evaluation}
\label{section:evaluation}

To evaluate our techniques, we have developed a prototype\footnote{Available at \url{https://github.com/ericpony/safety-prover}.}
in Java and used the \libalf\ library \cite{bollig2010libalf}
as the default inference engine. We used our prototype to check safety
properties for a range of parameterised systems, including cache coherence
protocols (German's Protocol), self-stabilising protocols
(Israeli-Jalfon's Protocol and Herman's Protocol), synchronisation protocols
(Lehmann-Rabin's Dining Philosopher Protocol), secure multi-party
computation protocol (David Chaums' Dining Cryptographers Protocol),
and mutual exclusion protocols (Szymanski's Protocol, Burn's Protocol,
Dijkstra's Protocol, Lamport's Bakery Algorithm, and Resource-Allocator
Protocol). Most of the examples we consider are standard benchmarks
in the literature of regular model checking (c.f.  \cite{abdulla2007regular,abdulla2013all,bouajjani2006abstract,RMC,rmc-thesis}).
Among them, German's Protocol and Kanban are more difficult than the
other examples for fully automatic verification (c.f. \cite{abdulla2007regular,abdulla2013all,kaiser2010dynamic}).

Based on these examples, we compare our learning method with existing
techniques such as SAT-based learning \cite{Neider-thesis,Neider13,lin2015regular,lin2016liveness},
extrapolating~\cite{tormc,legay2008t}, and abstract regular model checking (ARMC)~\cite{bouajjani2006abstract,bouajjani2004abstract}.
The SAT-based learning approach encodes automata as Boolean formulae and
exploits a SAT-solver to search for candidate automata representing
inductive invariants. It uses automata-based algorithms to either verify the correctness of the
candidate or obtain a counterexample that can be further encoded as a Boolean constraint.
T(O)RMC~\cite{tormc,legay2008t} extrapolates the limit of
the reachable configurations represented by an infinite sequence of automata.
The extrapolation is computed by first identifying the increment between successive automata,
and then over-approximating the repetition of the increment by adding loops to the automata.
ARMC is an efficient technique that integrates abstraction refinement into the fixed-point computation.
It begins with an existential abstraction obtained by merging states in the automata/transducers.
Each time a spurious counterexample is found, the abstraction can be refined by splitting
some of the merged states. ARMC is among the most efficient algorithms for regular model checking~\cite{habermehl2005regular}.


\begin{table*}[tb]
\begin{centering}
\scalebox{0.92}{ %
\begin{tabular}{|l|c|c|c|c|c|c|c||c|c|c||c|c|c||c|c|c||c|}
\hline 
\multicolumn{8}{|c||}{\textbf{The RMC problems}} & \multicolumn{3}{c||}{\textbf{RS}} & \multicolumn{3}{c||}{\textbf{SAT}} & \multicolumn{3}{c||}{\textbf{T(O)RMC}} & \textbf{ARMC}\tabularnewline
\hline 
\textsf{Name}  & \textsf{\footnotesize{}{}\#}\textsf{\tiny{}{}label}{\tiny{} } & \textsf{\footnotesize{}{}S}\textsf{\tiny{}{}init}{\tiny{} } & \textsf{\footnotesize{}{}T}\textsf{\tiny{}{}init}{\tiny{} } & \textsf{\footnotesize{}{}S}\textsf{\tiny{}{}trans} & \textsf{\footnotesize{}{}T}\textsf{\tiny{}{}trans} & \textsf{\footnotesize{}{}S}\textsf{\tiny{}{}bad} & \textsf{\footnotesize{}T}\textsf{\tiny{}bad} & \textsf{\footnotesize{}Time} & \textsf{\footnotesize{}S}\textsf{\tiny{}inv} & \textsf{\footnotesize{}T}\textsf{\tiny{}inv} & \textsf{\footnotesize{}Time} & \textsf{\footnotesize{}S}\textsf{\tiny{}inv} & \textsf{\footnotesize{}T}\textsf{\tiny{}inv} & \textsf{\footnotesize{}Time} & \textsf{\footnotesize{}S}\textsf{\tiny{}inv} & \textsf{\footnotesize{}T}\textsf{\tiny{}inv} & \textsf{\footnotesize{}Time}\tabularnewline
\hline 
\hline 
Bakery \cite{Fokkink-book}  & 3  & 3  & 3  & 5  & 19  & 3  & 9  & 0.0s  & 6  & 18  & 0.5s  & 2  & 5  & 0.0s & 6 & 11 & 0.0s\tabularnewline
\hline 
Burns \cite{abdulla2007regular}  & 12  & 3  & 3  & 10  & 125  & 3  & 36  & 0.2s  & 8  & 96  & 1.1s  & 2  & 10  & 0.1s & 7 & 38 & 0.0s\tabularnewline
\hline 
Szymanski \cite{szymanski1988simple}  & 11  & 9  & 9  & 118  & 412  & 13  & 40  & 0.3s  & 43  & 473  & 1.6s  & 2  & 21  & 2.0s & 51 & 102 & 0.1s\tabularnewline
\hline 
German \cite{german1992reasoning}  & 581  & 3  & 3  & 17  & 9.5k  & 4  & 2112  & 4.8s  & 14  & 8134  & t.o.  & \textendash{}  & \textendash{}  & t.o. & \textendash{} & \textendash{} & 10s\tabularnewline
\hline 
Dijkstra \cite{abdulla2007regular}  & 42  & 1  & 1  & 13  & 827  & 3  & 126  & 0.1s  & 9  & 378  & 1.7s  & 2  & 24  & 6.1s & 8 & 83 & 0.3s\tabularnewline
\hline 
Dijkstra, ring \cite{dijkstra1984invariance,fribourg1997reachability} & 12  & 3  & 3  & 13  & 199  & 3  & 36  & 1.4s  & 22  & 264  & 0.9s  & 2  & 14  & t.o. & \textendash{} & \textendash{} & 0.1s\tabularnewline
\hline 
Dining Crypto. \cite{chaum1988dining}  & 14  & 10  & 30  & 17  & 70  & 12  & 70  & 0.1s  & 32  & 448  & t.o.  & \textendash{}  & \textendash{}  & t.o. & \textendash{} & \textendash{} & 7.2s\tabularnewline
\hline 
Coffee Can \cite{lin2016liveness}  & 6  & 8  & 18  & 13  & 34  & 5  & 8  & 0.0s  & 3  & 18  & 0.2s  & 2  & 7  & 0.1s & 6 & 13 & 0.0s\tabularnewline
\hline 
Herman, linear \cite{herman1990probabilistic}  & 2  & 2  & 4  & 4  & 10  & 1  & 1  & 0.0s  & 2  & 4  & 0.2s  & 2  & 4  & 0.0s & 2 & 4 & 0.0s\tabularnewline
\hline 
Herman, ring \cite{herman1990probabilistic}  & 2  & 2  & 4  & 9  & 22  & 1  & 1  & 0.0s  & 2  & 4  & 0.4s  & 2  & 4  & 0.0s & 2 & 4 & 0.0s\tabularnewline
\hline 
Israeli-Jalfon \cite{israeli1990token}  & 2  & 3  & 6  & 24  & 62  & 1  & 1  & 0.0s  & 4  & 8  & 0.1s  & 2  & 4  & 0.0s & 4 & 8 & 0.0s\tabularnewline
\hline 
Lehmann-Rabin \cite{lehmann1981advantages}  & 6  & 4  & 4  & 14  & 96  & 3  & 13  & 0.1s  & 8  & 48  & 0.5s  & 2  & 11  & 0.8s & 19 & 105 & 0.0s\tabularnewline
\hline 
LR Dining Philo. \cite{lin2016liveness}  & 4  & 4  & 4  & 3  & 10  & 3  & 4  & 0.0s  & 4  & 16  & 0.2s  & 2  & 6  & 0.1s & 7 & 18 & 0.0s\tabularnewline
\hline 
Mux Array \cite{fribourg1997reachability}  & 6  & 3  & 3  & 4  & 31  & 3  & 18  & 0.0s  & 5  & 30  & 0.4s  & 2  & 7  & 0.2s & 4 & 14 & 0.0s\tabularnewline
\hline 
Res. Allocator \cite{donaldson2007automatic}  & 3  & 3  & 3  & 7  & 25  & 4  & 9  & 0.0s  & 5  & 15  & 0.0s  & 1  & 3  & 0.0s & 4 & 9 & 0.0s\tabularnewline
\hline 
Kanban \cite{abdulla2013all,kaiser2010dynamic}  & 3  & 25  & 48  & 98  & 250  & 37  & 68  & t.o.  & \textendash{}  & \textendash{}  & t.o.  & \textendash{}  & \textendash{}  & t.o. & \textendash{} & \textendash{} & 3.5s\tabularnewline
\hline 
Water Jugs \cite{waterjug}  & 11  & 5  & 6  & 23  & 132  & 5  & 12  & 0.1s  & 24  & 264  & t.o.  & \textendash{}  & \textendash{}  & t.o. & \textendash{} & \textendash{} & 0.0s\tabularnewline
\hline 
\end{tabular}} 
\par\end{centering}
\vspace{.6em}
\protect\caption{Comparing the performance of different RMC techniques. 
\textsf{\footnotesize{}{}\#}\textsf{\tiny{}{}label} stands for the size of alphabet;
\textsf{\footnotesize{}{}S}\textsf{\tiny{}{}x} and 
\textsf{\footnotesize{}{}T}\textsf{\tiny{}{}x} stand for the numbers of states
and transitions, respectively, in the automata/transducers.
\textbf{RS} is the result of our prototype using Rivest and Schapire's version of $L^{*}$;
\textbf{SAT}, \textbf{T(O)RMC}, and \textbf{ARMC} are the results of the other three techniques.
}
\vspace{-2em}
\label{tab:comp_tools} 
\end{table*}

The comparison of those algorithms are reported in Table~\ref{tab:comp_tools}, running on a MinGW64
system with 3GHz Intel i7 processor, 2GB memory limit, and 60-second
timeout. The experiments show that the learning method is quite efficient:
the results of our prototype are comparable with those of the ARMC algorithm\footnote
{Available at \url{http://www.fit.vutbr.cz/research/groups/verifit/tools/hades}.}
on all examples but Kanban, for which the minimal inductive invariant,
if it is regular, has at least 400 states. On the other hand, our algorithm
is significantly faster than ARMC in two cases, namely German's Protocol
and Dining Cryptographers.
ARMC comes with a bundle of options and heuristics,
but not all of them work for our benchmarks.
We have tested all the heuristics available from the tool
and adopted the ones\footnote{The heuristics are structure preserving,
backward computation, and backward collapsing with all states being predicates. See \cite{bouajjani2004abstract}
for explanations.} that had the best performance in our experiments.
The performance of SAT-based learning is comparable to the previous
two approaches whenever inductive invariants representable by
automata with few states exist. However, as its runtime grows exponentially
with the sizes of candidate automata, the SAT-based algorithm fails to solve
four examples that do not have small regular inductive invariants.
T(O)RMC seems to suffer from similar problems as it timeouts on all examples
that cannot be proved by the SAT-based approach.

Table~\ref{tab:comp_alg} reports the results of the learning-based algorithm geared with different
automata learning algorithms implemented in \textsf{\small{}libalf}.
As the table shows, these algorithms have similar performance on small examples; however, the algorithm of Rivest and Schapire~\cite{rivest:inference1993} and the algorithm of Kearns and Varzirani~\cite{kearns:introduction1994} are
significantly more efficient than the other algorithms on some large examples
such as Szymanski and German.
 Table~\ref{tab:comp_alg} shows that Kearns and Varzirani's algorithm
can often find smaller inductive invariants (fewer states) than the
other $L^*$ variants, which explains the performance difference.
 For $\mathit{NL}^*$, our implementation pays an additional cost to determinise the learned FA in order to answer the equivalence queries; this cost is significant when a large invariant is needed.

Recall that our approach uses a ``strict but generous teacher''. Namely, the target language of the teacher is $\langT^*(\langI)$ for an RMC problem $(\cI,\cT,\cB)$. We have tried the version where a ``flexible and generous teacher'' is used, that is, the target language of the teacher is the complement of $(\langT^{-1})^*(\langB)$. The performance, however, is worse than that of our current version. This result may reflect the fact that the set  $\langT^*(\langI)$ is ``more regular'' (i.e., can be expressed by a DFA with fewer states) than the set $(\langT^{-1})^*(\langB)$ in practical cases.

\begin{table*}[htb]
\centering
	
\begin{tabular}{|l||c|c|c||c|c|c||c|c|c||c|c|c||c|c|c|}
	
\hline 
&\multicolumn{3}{|c||}{$\mathbf{RS}$}&\multicolumn{3}{|c||}{$\mathbf{L^*}$}&\multicolumn{3}{c||}{$\mathbf{L^*c}$}&\multicolumn{3}{c||}{$\mathbf{KV}$}&\multicolumn{3}{c|}{$\mathbf{NL^*}$}\\
\cline{2-16}
 & \textsf{Time} & \textsf{\footnotesize{}S}\textsf{\tiny{}inv} & \textsf{\footnotesize{}T}\textsf{\tiny{}inv} & \textsf{Time}  & \textsf{\footnotesize{}S}\textsf{\tiny{}inv} & \textsf{\footnotesize{}T}\textsf{\tiny{}inv} & \textsf{Time}   & \textsf{\footnotesize{}S}\textsf{\tiny{}inv} & \textsf{\footnotesize{}T}\textsf{\tiny{}inv} & \textsf{Time}  & \textsf{\footnotesize{}S}\textsf{\tiny{}inv} & \textsf{\footnotesize{}T}\textsf{\tiny{}inv} & \textsf{Time}  & \textsf{\footnotesize{}S}\textsf{\tiny{}inv} & \textsf{\footnotesize{}T}\textsf{\tiny{}inv}\tabularnewline
\hline 
\hline 
Bakery & 0.0s & 6 & 18 & 0.0s & 6 & 18 & 0.1s & 6 & 18 & 0.0s & 6 & 18 & 0.1s & 6 & 18\tabularnewline
\hline 
Burns & 0.2s & 8 & 96 & 0.5s & 8 & 96 & 0.2s & 8 & 96 & 0.2s & 8 & 96 & 0.4s & 6 & 72\tabularnewline
\hline 
Szymanski & 0.3s & 43 & 473 & 2.4s & 51 & 561 & 1.2s & 41 & 451 & 0.3s & 41 & 451 & 1.4s & 59 & 649\tabularnewline
\hline 
German & 4.8s & 14 & 8134 & 13s & 15 & 8715 & 26s & 15 & 8715 & 4.2s & 14 & 8134 & 40s & 15 & 8715\tabularnewline
\hline 
Dijkstra & 0.1s & 9 & 378 & 0.4s & 9 & 378 & 0.1s & 9 & 378 & 0.2s & 9 & 378 & 0.2s & 10 & 420\tabularnewline
\hline 
Dijkstra, ring & 1.4s & 22 & 264 & 2.7s & 20 & 240 & 8.9s & 22 & 264 & 1.5s & 14 & 168 & 1.8s & 20 & 240\tabularnewline
\hline 
Dining Crypto. & 0.1s & 32 & 448 & 0.2s & 34 & 476 & 0.2s & 38 & 532 & 0.1s & 19 & 266 & 0.3s & 36 & 504\tabularnewline
\hline 
Coffee Can & 0.0s & 3 & 18 & 0.0s & 3 & 18 & 0.0s & 4 & 24 & 0.0s & 3  & 18 & 0.0s & 4 & 24\tabularnewline
\hline 
Herman, linear & 0.0s & 2 & 4 & 0.0s & 2 & 4 & 0.0s & 2 & 4 & 0.0s & 2 & 4 & 0.0s & 2 & 4\tabularnewline
\hline 
Herman, ring & 0.0s & 2 & 4 & 0.0s & 2 & 4 & 0.0s & 2 & 4 & 0.0s & 2 & 4 & 0.0s & 2 & 4\tabularnewline
\hline 
Israeli-Jalfon & 0.0s & 4 & 8 & 0.0s & 4 & 8 & 0.0s & 4 & 8 & 0.0s & 4 & 8 & 0.0s & 4 & 8\tabularnewline
\hline 
Lehmann-Rabin & 0.1s & 8 & 48 & 0.2s & 8 & 48 & 0.1s & 8 & 48 & 0.1s & 8 & 48 & 0.2s & 8 & 48\tabularnewline
\hline 
LR D. Philo. & 0.0s & 4 & 16 & 0.2s & 4 & 16 & 0.0s & 5 & 20 & 0.0s & 4 & 16 & 0.0s & 8 & 32\tabularnewline
\hline 
Mux Array & 0.0s & 5 & 30 & 0.0s & 5 & 30 & 0.0s & 5 & 30 & 0.0s & 5 & 30 & 0.0s & 5 & 30\tabularnewline
\hline 
Res. Allocator & 0.0s & 5 & 15 & 0.0s & 4 & 12 & 0.0s & 5 & 15 & 0.0s & 5 & 15 & 0.0s & 5 & 15\tabularnewline
\hline 
Kanban & {\small{}\textgreater{}}60s & \textendash{} & \textendash{} & {\small{}\textgreater{}}60s & \textendash{} & \textendash{} & {\small{}\textgreater{}}60s & \textendash{} & \textendash{} & {\small{}\textgreater{}}60s & \textendash{} & \textendash{} & {\small{}\textgreater{}}60s & \textendash{} & \textendash{}\tabularnewline
\hline 
Water Jugs & 0.1s & 24 & 264 & 0.5s & 25 & 275 & 0.5s & 25 & 275 & 0.1s & 24 & 264 & 0.5s & 25 & 275\tabularnewline
\hline 
\end{tabular}
\vspace{.5em}
\caption{Comparing the performance based on different automata learning algorithms. 
The columns 
 $\mathbf{L^*}$, $\mathbf{L^*c}$, $\mathbf{RS}$, $\mathbf{KV}$, and $\mathbf{NL^*}$ are the results of
the original $L^*$ algorithm by Angluin \cite{angluin1987learning},
a variant of $L^*$ that adds all suffixes of the counterexample to columns,
the version by Rivest and Shapire~\cite{rivest:inference1993},
the version by Kearns and Vazirani~\cite{kearns:introduction1994},
and the $\mathit{NL}^*$ algorithm~\cite{bollig:angluin2009}, respectively.
}\label{tab:comp_alg}
\vspace{-2em}
\end{table*}

\section{Conclusion}
\label{section:conclusion}
The encouraging experimental results suggest that the performance of the $L^*$
algorithm for synthesising regular inductive invariants
is comparable to the most sophisticated algorithm for regular model 
checking for proving safety.
From a theoretical viewpoint, learning-based approaches (including
ours and \cite{Neider-thesis,Neider13,habermehl2005regular})
have a termination guarantee when the set $\langT^*(\langI)$ is regular, which is not guaranteed by approaches based on a fixed-point computation (e.g.,  the ARMC~\cite{bouajjani2004abstract}).
An interesting research question is whether $L^*$ algorithm can be effectively
used for verifying other properties, e.g., liveness.

\paragraph*{Acknowledgements}

This article is a full version of \cite{CHLR17}. We thank anonymous referees 
for their useful comments. R\"{u}mmer was supported by the Swedish
Research Council under grant~2014-5484.

\bibliographystyle{abbrv}
\bibliography{refs}

\newpage
\section{Appendix}
We provide some more examples for regular model checking in the appendix.
\OMIT{
The appendix consists of the complete proof of Theorem~\ref{thm:tem} and more examples.
\begingroup
\def\thetheorem{\ref{thm:tem}}

\begin{theorem}[Termination]
	When $\langT^*(\langI)$ is regular, our learning-based algorithm is guaranteed to terminate in at most $k$ iterations, where $k$ is the size of the minimal DFA of $\langT^*(\langI)$.
\end{theorem}
\begin{proof}
	The theorem is proved by contradiction. Assume that our learning-based algorithm reached iteration $k+1$. The learner received $k+1$ counterexamples. Observe that all counterexamples returned by the teacher are in the symmetric difference of $\langT^*(\langI)$ and the candidate language $\langA_h$. By the property of automata learning algorithms (check Section~\ref{section:learning}), the learning algorithm will propose a candidate FA $\cA_f$ recognizing $\langT^*(\langI)$ before the $(k+1)$th iteration. Since $\langA_f=\langT^*(\langI)$, it holds that $\langI\subseteq \langA_f$ and $\langT(\langA_f)\subseteq \langA_f$. If $\langA_f \cap \langB =\emptyset$, the teacher will terminate and output $(\cI,\cT,\cB)=\ltrue$ in addition with a proof using $\cA_f$ as the inductive invariant. Otherwise, it will terminate and output $(\cI,\cT,\cB)=\lfalse$ in addition with a word in $\langT^*(\langI) \cap \langB$ as evidence. Then we find a contradiction because the algorithm will terminate at the iteration when an equivalence query on $\cA_f$ was posed to the teacher, which is before the $(k+1)$th iteration.
\end{proof}
\addtocounter{theorem}{-1}
\endgroup
}

\begin{figure}
\centering
\begin{tabular}{c|c}
	& $\lambda$\\
	\hline
	$\lambda$	     & $\bot$\\
	$\T$				     & $\top$\\
	\hline
	$\N$				     & $\bot$\\
	$\T\T$				& $\bot$\\
	$\T\N$				& $\top$
\end{tabular}
\adjustbox{valign=M}{
\begin{tikzpicture}
\tikzset{every state/.style={minimum size=15pt}}
\node[initial,state] (A) {};
\node[state,accepting] (B) [below =2em of A] {};
\tikzset{loop/.style={in=30,out=330, looseness=5}}
\draw[->,loop right] (A) to node[right] {$\N$} (A);
\tikzset{loop/.style={in=30,out=330, looseness=5}}
\draw[->,loop right] (B) to node[right] {$\N$} (B);
\draw[->, bend left] (A) to node[right]{$\T$}(B);
\draw[->, bend left] (B) to node[left]{$\T$}(A);
\end{tikzpicture}}
\caption{Using learning to solve the RMC problem $(\cI,\cT,\cB)$ of Herman's Protocol. The table on the left is the content of the {\em observation table} used by the automata learning algorithm of Rivest and Shaphire~\cite{rivest:inference1993} and the automaton on the right is the inferred candidate DFA.}\label{figure:learning-token-passing}
\end{figure}

\begin{example}[RMC of the Herman's Protocol]
	Consider the RMC problem of Herman's Protocol in Example~\ref{example:herman}. Initially, several membership queries will be posed to the teacher to produce the closed observation table on the left of Fig.~\ref{figure:learning-token-passing}. In this example, the teacher returns $\top$ only for words containing an odd number of the symbol \T. 
	The learner will then construct the candidate FA $\cA_h$ on the right of Fig.~\ref{figure:learning-token-passing} and pose an equivalence query on $\cA_h$. Observe that $\cA_h$ can be used as an inductive invariant in a regular proof. It is easy to verify that $\langI = \N^*\T(\N^*\T\N^*\T\N^*)^* \subseteq \langA_h$ and $\langA_h \cap \langB =\langA_h\cap \N^* =\emptyset$. The condition (3) $\langT(\langA_h)\subseteq \langA_h$ of a regular proof can be proved to be correct based on the following observation:  the FA $\cA_h$ recognises exactly the set of all configurations with an odd number of tokens. When tokens are discarded in a transition, the total number of discarded tokens in all processes is always $2$. The other two types of transitions will not change the total number of tokens in the system. It follows that taking a transition from any configurations in $\langA_h$ will arrive a configuration with an odd number of tokens, which is still in $\langA_h$.
\end{example}

The verification of Herman's Protocol finishes after the first iteration of learning and hence we cannot see how the learning algorithm uses a counterexample for refinement. Below we introduce a slightly more difficult problem.

\begin{example}[RMC of the Israeli-Jalfon's Protocol]
Israeli-Jalfon's Protocol is a routing protocol of $n$ processes organised in a ring-shaped topology, where a process may hold a token.
Again we assume the processes are numbered from $0$ to $n-1$. If the process with the number $i$ is chosen by the scheduler, it will toss a coin to decide  to pass the token to the left or right, i.e. the one with the number $(i-1)\%n$ or $(i+1)\%n$. When two tokens are held by the same process, they will be merged. The safety property of interest is that every system configuration has at least one token.
The protocol and the corresponding safety property can be modelled as a regular model checking problem $(\cI, \cT, \cB)$, together with
the set of initial configurations $\langI=(\T+\N)^*\T(\T+\N)^*\T(\T+\N)^*$, i.e., at least two processes have tokens, and the set of bad configurations $\langB=\N^*$, i.e., all tokens have disappeared. Again we use the regular language $\langE= ((\T,\T)+(\N,\N))$ to denote the relation that a process is idle, i.e, the process does not change its state. The transition relation $\langT$ then can be specified as a union of the regular expressions in Figure~\ref{fig:tr_israeli}.

\begin{figure}[h]
\centering
\scalebox{0.7}{
\begin{tabular}{lcl}
	$\langE^*(\T,\N)((\T,\T)+(\N,\T))\langE^*$, $((\T,\T)+(\N,\T))\langE^*(\T,\N)$  & &({\it Pass the token right})\\
	$\langE^*((\T,\T)+(\N,\T))(\T,\N)\langE^*$, $(\T,\N)\langE^*((\T,\T)+(\N,\T))$  &  &({\it Pass the token left})
\end{tabular}}
\caption{The Transition Relation of Israeli-Jalfon's Protocol}
\label{fig:tr_israeli}
\end{figure}

When the automata learning algorithm of Rivest and Shaphire is applied to solve the RMC problem, we can obtain an inductive invariant with 4 states in 3 iterations.
The first candidate FA in Fig.~\ref{figure:learning-IJ}(a) is incorrect because it does not include the initial configuration \T\T. By analysing the counterexample \T\T, the learning algorithm adds the suffix \T\ to the set $E$. The second candidate FA in Fig.~\ref{figure:learning-IJ}(b) is still incorrect because it contains an unreachable bad configuration \N\N\N. The learning algorithm analyses the counterexample \N\N\N\ and adds the suffix \N\ to the set $E$. This time it obtains the candidate FA  in Fig.~\ref{figure:learning-IJ}(c), which is a valid regular inductive invariant.
\end{example}

\begin{figure}[htb]
	\begin{minipage}[t]{0.5\textwidth}
		\centering
		\begin{tabular}{c|c}
	& $\lambda$\\
	\hline
	$\lambda$	     & $\bot$\\
	\hline
	$\T$				     & $\bot$\\
	$\N$				     & $\bot$\\
\end{tabular}
\adjustbox{valign=M}{
	\begin{tikzpicture}
	\tikzset{every state/.style={minimum size=15pt}}
	\node[initial above,state] (A) {};
	\tikzset{loop/.style={in=300,out=240, looseness=5}}
	\draw[->,loop right] (A) to node[below] {$\T,\N$} (A);
	\end{tikzpicture}}\\[2ex]
		(a) First candidate automaton
	\end{minipage}\\\ \\
	\begin{minipage}[t]{0.5\textwidth}
		\centering
		\begin{tabular}{c|c|c}
	& $\lambda$&\T\\
	\hline
	$\lambda$	     &$\bot$&$\bot$\\
	$\T$				     &$\bot$&$\top$\\
	$\T\T$				&$\top$&$\top$\\
	\hline
	$\N$				     &$\bot$&$\top$\\
	$\T\N$				&$\top$&$\top$\\
	$\T\T\T$				&$\top$&$\top$\\
	$\T\T\N$				&$\top$&$\top$\\
\end{tabular}
\adjustbox{valign=M}{
	\begin{tikzpicture}
	\tikzset{every state/.style={minimum size=15pt}}
	\node[initial above,state] (A) {};
	\node[state] (B) [below =2em of A] {};
	\node[state,accepting] (C) [below =2em of B] {};
	
	\tikzset{loop/.style={in=300,out=240, looseness=5}}
	\draw[->,loop right] (C) to node[below] {$\T,\N$} (C);
	\draw[->] (A) to node[right]{$\T,\N$}(B);
	\draw[->] (B) to node[right]{$\T,\N$}(C);
\end{tikzpicture}}\\
		(b) Second candidate automaton
	\end{minipage}\\\ \\\ \\
	\begin{minipage}[t]{0.5\textwidth}
		\centering
		\begin{tabular}{c|c|c|c}
	& $\lambda$&\T&\N\\
	\hline
	$\lambda$	     &$\bot$&$\bot$&$\bot$\\
	$\T$				     &$\bot$&$\top$&$\top$\\
	$\N$				     &$\bot$&$\top$&$\bot$\\
	$\T\T$				&$\top$&$\top$&$\top$\\
	\hline
	$\T\N$				&$\top$&$\top$&$\top$\\
	$\N\T$				&$\top$&$\top$&$\top$\\
	$\N\N$				&$\top$&$\top$&$\bot$\\
	$\T\T\T$				&$\top$&$\top$&$\top$\\
	$\T\T\N$				&$\top$&$\top$&$\top$\\
\end{tabular}
\adjustbox{valign=M}{
	\begin{tikzpicture}
	\tikzset{every state/.style={minimum size=15pt}}
	\node[initial,state] (A) {};
	\node[state] (B) [below left =2em of A] {};
	\node[state] (C) [below right =2em of A] {};
	\node[state,accepting] (D) [below =4em of A] {};
	
	\tikzset{loop/.style={in=300,out=240, looseness=5}}
	\draw[->,loop right] (D) to node[below] {$\T,\N$} (D);
	\tikzset{loop/.style={in=30,out=330, looseness=5}}
	\draw[->,loop right] (C) to node[right] {$\N$} (C);
	\draw[->] (A) to node[left]{$\T$}(B);
	\draw[->] (A) to node[right]{$\N$}(C);
	\draw[->] (B) to node[left]{$\T,\N$}(D);
	\draw[->] (C) to node[right]{$\T$}(D);

\end{tikzpicture}}\\[2ex]
		(c) Third candidate automaton
	\end{minipage}
	\caption{Using learning to solve the RMC problem $(\cI,\cT,\cB)$ of Israeli-Jalfon's Protocol. The table on the left of each sub-figure is the content of the {\em observation table} used by the automata learning algorithm of Rivest and Shaphire~\cite{rivest:inference1993}.}\label{figure:learning-IJ}
\end{figure}

\OMIT{
When the set of reachable configurations is not regular, 
our learning-based algorithm is not guaranteed to find an 
inductive invariant even when one exists. We demonstrate
this fact in the following example.
\begin{example}[RMC of a token passing protocol]
Consider an RMC problem $(\cI,\cT,\cB)$ with $\langI=\N\T(\N\T)^*$,
$\langT=\langE^*(\T,\N)(\N,\T)\langE^*$, and $\langB = \N^*$
(recall that $\langE$ denotes $(\T,\T)+(\N,\N)$).
This problem describes a token passing protocol where each process
tries to pass its token to the right when its right process does 
not hold a token. The set of bad configurations,
where no process holds a token, is unreachable since there 
exists at least one token at the beginning and the number of tokens
does not change during execution. Observe that $\langT^*(\langI)$ is 
not regular, for $\langT^*(\langI) \cap \N^*\T^* = \{\N^n\T^n : n\ge 1\}$,
a well-known non-regular set. However, $(\cI,\cT,\cB)=\ltrue$ because
$\N^*\T(\N+\T)^*$ is an inductive invariant.
Unfortunately, our learning-based algorithm could not terminate on 
this example, producing a pathological sequence of counterexamples 
$\N\T$, $\T\N\T$, $\N\N\T\T$, $\N\N\N\T\T\T$, $\N\N\N\N\T\T\T\T$, $\N\N\N\N\N\T\T\T\T\T$...
for all learning algorithms considered in our experiments, 
namely, $L^*$, $L^*c$, $NL^*$, $KV$ and $RS$.
It is still unclear to us whether it is possible to make our algorithm
converge with cleverly chosen counterexamples.
\end{example}
}

\end{document}